\newtheorem{lemma}{Lemma}
\begin{document}
\title{Background Independence and Quantum Causal Structure}
\author{L. Parker}
\email{l.parker@uq.net.au}
\affiliation{Centre for Engineered Quantum Systems, School of Mathematics and Physics, University of Queensland, QLD 4072 Australia}
\author{F. Costa}
\email{f.costa@uq.edu.au}
\affiliation{Centre for Engineered Quantum Systems, School of Mathematics and Physics, University of Queensland, QLD 4072 Australia}

\maketitle

\begin{abstract}
One of the key ways in which quantum mechanics differs from general relativity is that it requires a fixed background reference frame for spacetime. In fact, this appears to be one of the main conceptual obstacles to uniting the two theories. Additionally, a combination of the two theories is expected to yield `indefinite' causal structures. In this paper, we present a background-independent formulation of the process matrix formalism---a form of quantum mechanics that allows for indefinite causal structure---while retaining operationally well-defined measurement statistics. We do this by imposing that the probabilities arising in the formalism---which we ascribe to measurement outcomes across the points of a discrete spacetime---be invariant under permutations of spacetime points. We find (a) that one still obtains nontrivial, indefinite causal structures with background independence, (b) that we lose the idea of local operations in distinct laboratories, but can recover it by encoding a reference frame into the physical states of our system, and (c) that permutation invariance imposes surprising symmetry constraints that, although formally similar to a superselection rule, cannot be interpreted as such.
\end{abstract}

\section{Introduction}
In a quantum theory of gravity it is expected that spacetime itself will be quantised, giving rise to indefinite, or `quantum', causal structures \cite{Butterfield2001, Hardy2005}. The process matrix formalism was developed to describe these causal structures \cite{oreshkov:2012}. In fact, it  describes the most general causal relations between a finite set of regions, or `parties', compatible with the local validity of quantum mechanics in each region. However, the framework relies on an \emph{a priori} labelling of the parties, which tacitly presupposes the existence of a background reference frame. This is in conflict with the background independence of general relativity, which associates no absolute meaning to individual spacetime points or regions \cite{rovelli:1991, sep-hole}. Incorporating background independence into the quantum formalism is in fact one of the main challenges in the development of a theory of quantum gravity \cite{Ashtekar2004, Smolin2006}.

In order to represent a viable approach to quantum gravity, the process matrix formalism should be able to describe indefinite causal structures without reference to a fixed background. Here, we show how this can be done within a toy model, where we treat a process matrix as a particular configuration of a discretised spacetime, with laboratories that correspond to the discrete units of that spacetime. A process matrix will be background independent if it is invariant under any arbitrary permutation of `laboratories' or points of spacetime.
	
In this paper, we introduce background independent processes and describe some of their properties. First, we note that indefinite causal structures still arise in permutation-invariant processes. We show that imposing permutation invariance results in the loss of a distinction between spacetime points. As in general relativity, one recovers a distinction between spacetime points by using a material reference frame (a reference frame made up of physically observable systems, the `rods and clocks' picture).

Finally, we discuss the symmetry properties of permutation-invariant processes. We expect, in analogy to symmetries in ordinary quantum mechanics, that permutation invariance should give rise to a superselection rule for some `charge'---a property that can only take definite values, with coherence between different charge values forbidden (but classical probabilistic mixtures allowed). However, we find that, while indeed coherence between charge sectors is suppressed, processes cannot in general be understood as mixtures of different charge values, as the definite-charge components of the mixture are not valid processes. We show explicitly why this occurs in the case of a bipartite qubit process (where a `qubit' process is just one with two-dimensional local Hilbert spaces). We also present a partial proof generalising that result to any process matrix dwelling in the symmetric or antisymmetric subspaces of the symmetric group. Our results suggest that no invariant processes with a definite charge may exist, although more work will be needed to substantiate this conjecture. The breakdown in the association between symmetries and superselection rules may indicate that background independence in quantum mechanics cannot be interpreted analogously to other known symmetries of nature.

\section{Quantum processes in spacetime}
We will first review the process matrix formalism in its most common presentation---using a language borrowed from quantum information. Afterwards, we will discuss how this formalism can be used as a model of discretised spacetime.
\subsection{The process matrix formalism}\label{pm-formalism}

The process matrix formalism is a framework for quantum mechanics that does not assume any global background causal structure, just that quantum mechanics is obeyed locally. Conceptually, it extends quantum mechanics in a similar way in which, relaxing global Lorentz invariance, one can extend special relativity to general relativity. Relaxing the assumption of causal structure allows one to obtain new, `indefinite' causal relations that are incompatible with a fixed ordering of events. Relationships of this type have been observed in the laboratory \cite{procopio:2015, rubino2017, rubino2022experimental, goswami:2018, goswami2018communicating, wei2019experimental, Taddei2021, guo2020experimental, Rubino2020}, where the lack of causal order arises from temporally delocalised events, rather than from a quantum spacetime \cite{Guerin2018, Oreshkov2019timedelocalized}. Much of the experimental interest derives from the applications of indefinite causal relations to computation and communication \cite{chiribella09, chiribella12, colnagh:2012, araujo14, feixquantum2015, Guerin2016, Ebler2018, Salek2018, Gupta2019}.  Here, we briefly describe the aspects of the process formalism that are relevant to this work. For more details, see references \cite{oreshkov:2012,shrapnel:2018,araujo:2015}.

The simplest way to think of process matrices is as follows. Consider a system of $N$ laboratories. Each laboratory is occupied by an experimenter capable of performing all of the preparations, operations, and measurements compatible with the standard measurement formalism of quantum mechanics. Formally, this means that each experimenter has the ability to perform a \textit{quantum instrument}---a set $\mathcal{I}^x=\{\mathcal{M}_i^X\}_{i=1}^n$ of completely positive (CP) maps that sum to a completely positive and trace preserving (CPTP) map. The superscript $X$ denotes that the maps $\mathcal{M}^X:\mathcal{L}\left(\mathcal{H}^{X_I}\right)\to \mathcal{L}\left(\mathcal{H}^{X_O}\right)$ act on laboratory $X$. The Hilbert spaces $\mathcal{H}^{X_I}$, $\mathcal{H}^{X_O}$, respectively represent the incoming and outgoing state-spaces of laboratory $X$, with $\mathcal{L}\left(\mathcal{H}\right)$ denoting the linear operators on $\mathcal{H}$.

Consider the case where we have two parties, Alice and Bob, who respectively have access to instruments $\mathcal{I}^{A}=\{\mathcal{M}^A_i\}$ and $\mathcal{I}^B=\{\mathcal{N}^B_j\}$. The probability that Alice and Bob realise a particular combination of operations $\mathcal{M}^{A}_{i},\mathcal{N}^{B}_{j}$ is given by some probability distribution $P(\mathcal{M}^{A}_{i},\mathcal{N}^{B}_{j})$. To be consistent with quantum mechanics, $P$ must be a \textit{multilinear map} \cite{shrapnel:2018}. The \textit{Choi-Jamio{\l}kowski isomorphism} \cite{jamio72, Choi1975} allows us to represent these operations by sending CP maps $\mathcal{M}^{X}$ to positive semidefinite linear operators $M^{X_IX_O}_i:=[\mathcal{I}\otimes\mathcal{M}^X_i(\ketbra{\phi^+}{\phi^+})]^T\in \mathcal{L}(\mathcal{H}^{X_I}\otimes \mathcal{H}^{X_O})$, where $\ket{\phi^+}=\sum_i\ket{i}^{X_I}\otimes\ket{i}^{X_I}$ is a non-normalised maximally entangled state and $^T$ denotes transposition in the computational basis. These operators act over an input Hilbert space $X_I$ and an output Hilbert space $X_O$. In this representation, the trace preserving condition reads $\Tr_{X_O}M^{X_IX_O}=\mathbbm{1}^{X_I}$; this means that, for a set of maps that form an instrument, we must have $\Tr_{X_O}[\sum_iM^{X_IX_O}_i]=\mathbbm{1}^{X_I}$.

\begin{figure}[ht]
\centering
\includegraphics[width=\linewidth]{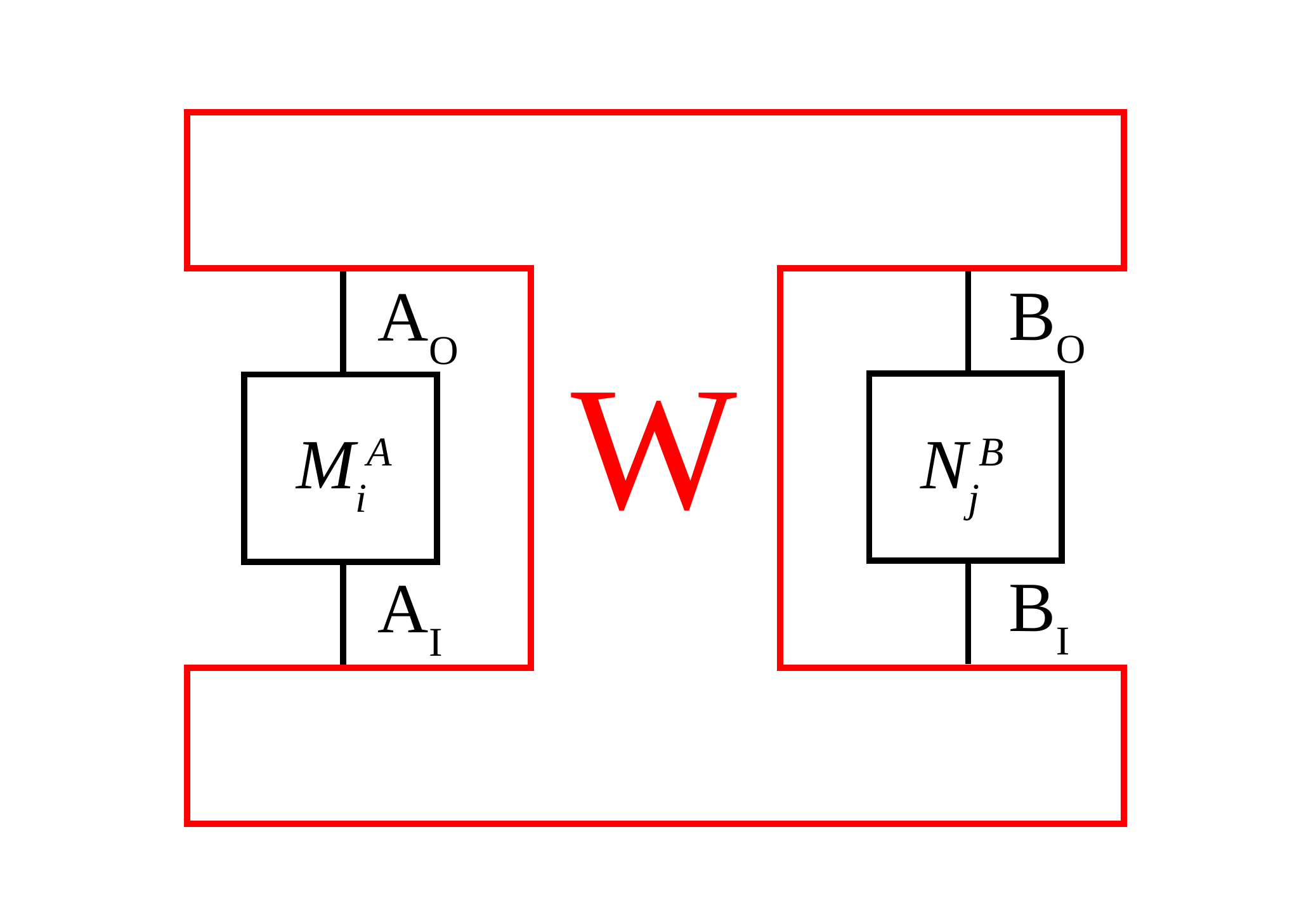}
\caption{A graphical depiction of a process matrix $W=W^{AB}$ with two laboratories, $A$ and $B$. Operations $M_i^A$ and $N_j^B$ are performed on the systems entering laboratories $A$ and $B$, respectively. The probability of $M_i^A$ and $N_j^B$ being realised out of their respective instruments is given by $P(M_i^A,N_j^B) = \Tr[W^{A}\cdot(M^{A}_{i}\otimes N^{B}_{j})]$ (Eq.~\eqref{eq:born-rule}. Note that we have shortened the laboratory labellings from those used in Eq.~\eqref{eq:born-rule}, so that, in the superscripts, $A=A_IA_O$ and $B=B_IB_O$.}
\end{figure}

Our complete list of probabilities $P$ now becomes a multilinear map over linear operators. This map is equivalent to \cite[prop.~2.38]{heinosaari}
\begin{align}
	&P(M^{A_IA_O}_{i}\otimes N^{B_IB_O}_{j})\nonumber\\
	&=\Tr[W^{A_IA_OB_IB_O}\cdot(M^{A_IA_O}_{i}\otimes N^{B_IB_O}_{j})] \label{eq:born-rule}
,\end{align}
for some linear operator $W^{A_IA_OB_IB_O}\in \mathcal{L}(\mathcal{H}^{A_IA_OB_IB_O})$.
	
$W^{A_IA_OB_IB_O}$ is called a \textit{process matrix}, and is the generalisation of a joint quantum state (from the point of view of a probability measure) to correlations that can be spacelike, timelike, or neither---those with indefinite causal structure. Process matrices must satisfy the constraints
\begin{gather}
		W^{A_IA_OB_IB_O}\geq 0 , \label{eq:pm-positive}\\ 
		\Tr[W^{A_IA_OB_IB_O}\cdot(M^{A_IA_O}N^{B_IB_O})]=1, 
		\label{eq:pm-normalised} \\
		\nonumber
		\forall\, M,N\geq0 , \\
		\nonumber
		\Tr_{A_O}[M^{A_IA_O}]=\mathbbm{1}^{A_I}, \Tr_{B_O}[N^{B_IB_O}]=\mathbbm{1}^{B_I},
\end{gather}
which ensure that probabilities are nonnegative and sum to one. We have left out tensor product symbols for convenience, and will continue to do so where it is clear.

In a \textit{Hilbert-Schmidt} basis, i.e., a basis $\{\sigma^X_i\}$ of $\mathcal{L}\left(\mathcal{H}\right)$ satisfying $\sigma_0^X=\mathbbm{1}^X$, $\Tr[\sigma^X_i \sigma^X_j]=d_X \delta_{ij}$, ($d_X:=\dim(\mathcal{H}^X)$) and $\Tr[\sigma^X_i]=0$ for $i>0$, a process matrix can be represented as
\begin{equation}
	W^{A_IA_OB_IB_O}=\sum_{ijkl}w_{ijkl}\sigma_i^{A_I}\sigma_j^{A_O} \sigma_k^{B_I} \sigma_l^{B_O}
,\end{equation}
where $w_{ijkl} \in \mathbb{R}$ since $W^{A_IA_OB_IB_O}$ is hermitian. 

The probability normalisation requirement forbids certain Hilbert-Schmidt terms from appearing in the decomposition of an allowed process.
We call terms with identity on all outputs except $X$ type $X$ process terms, all outputs except $X,Y$ terms of type $XY$ etc. Forbidden bipartite process terms are terms of the form $A_O$, $B_O$, $A_OB_O$, $A_IA_OB_O$, $A_OB_IB_O$, and $A_IA_OB_IB_O$. The process constraint, Eq.~\eqref{eq:pm-normalised}, requires that $\Tr[W \sigma]=0$ for any such terms $\sigma$. Thus, we effectively have linear constraints on the matrix elements of allowed processes.

One consequence of not making \emph{a priori} assumptions about the causal structure is the appearance of novel types of causal order that cannot be expressed in the standard formalism of quantum mechanics. Process matrices can be causally ordered, which corresponds to the familiar situation where A comes before B comes before C, or they can be causally \textit{separable}, convex combinations of processes that have different causal orders such as `A before B' and `B before A', representing classical ignorance of causal order. One novel aspect of the process matrix formalism is that one can also have indefinite causal order, where it does not make sense to say that `A is before B' or vice versa: there are signalling correlations from A to B and also from B to A, which cannot be interpreted as classical ignorance.

Throughout this section we have only discussed bipartite processes, for simplicity. Everything we have discussed generalises straightforwardly to an arbitrary number of parties. We refer the reader to references \cite{oreshkov:2012,shrapnel:2018,araujo:2015, abbott:2016,Abbott2017genuinely,wechs:2019} for a more complete discussion.

\subsection{Process matrices as toy spacetime configurations}
The crux of this paper is in the fact that we can use process matrices as toy models of spacetime. It is worth going into some detail about what exactly this means.

Most of the literature on quantum causal structures, in particular that concerned with experiments or practical applications, regards a process matrix as a representation of a finite number of finite-size `regions', which can be thought of as physical laboratories in which measurements are made on some quantum state. This is the interpretation that has been used in our discussion of the process matrix formalism so far.

It is now time to move beyond this interpretation. For the rest of this paper, we will treat a given process matrix as a toy model of a discretised spacetime. This is consistent with several approaches to quantum gravity, which stipulate that spacetime is discrete at a fundamental level \cite{Myrheim1978, Bombelli1987, Hooft1996, Loll1998, Dowker2006}. However, here we do not consider any explicit model, but simply a generic framework that allows us to avoid the complications associated with continuous spacetime.

Within this framework, each of what we have been referring to as `laboratories' now corresponds to a fundamental, indivisible unit of our discrete spacetime ---in other words, to a single spacetime point. Each indivisible point can be understood as the theoretical maximal spatio-temporal resolution of a measurement apparatus. Occasionally, we will still refer to the elementary spacetime points as `laboratories' or `parties', in keeping with the terminology in the literature, but it must be remembered that this does not entail the existence of a physical laboratory at each point.

We can make the connection with the more ordinary conception of spacetime more explicit. This is usually defined as a differentiable manifold, consisting of an uncountable set $\mathfrak{M}$ that is locally diffeomorphic to Minkowski space. %\footnote{Some approaches take a topological manifold as a starting point, while others do not assume any properties prior to the introduction of a metric. Our model is closer to the latter approach, as we do not assume any intrinsic structure within our discrete spacetime. All properties, including metric, topology, etc, are supposed to be encoded in the physical properties defined at each point.}
Any classical physical property, including the metric, is represented in the form of fields, namely as functions $\phi(x)$ defined at each spacetime point $x\in\mathfrak{M}$.
In a quantum theory, the fields at each point are not uniquely defined: a measurement at $x$ can yield a (possibly uncountable) set of outcomes $\left\{\phi_i(x)\right\}_i$, where $i$ labels the outcomes. A measurement at the point $x$ is clearly an idealisation, with any physical measurement extending to a finite region, but the basic operational meaning of a spacetime point is still the (limit of) the smallest possible region where a measurement can be made.
\begin{figure}[ht]
\centering
\includegraphics[width=\linewidth]{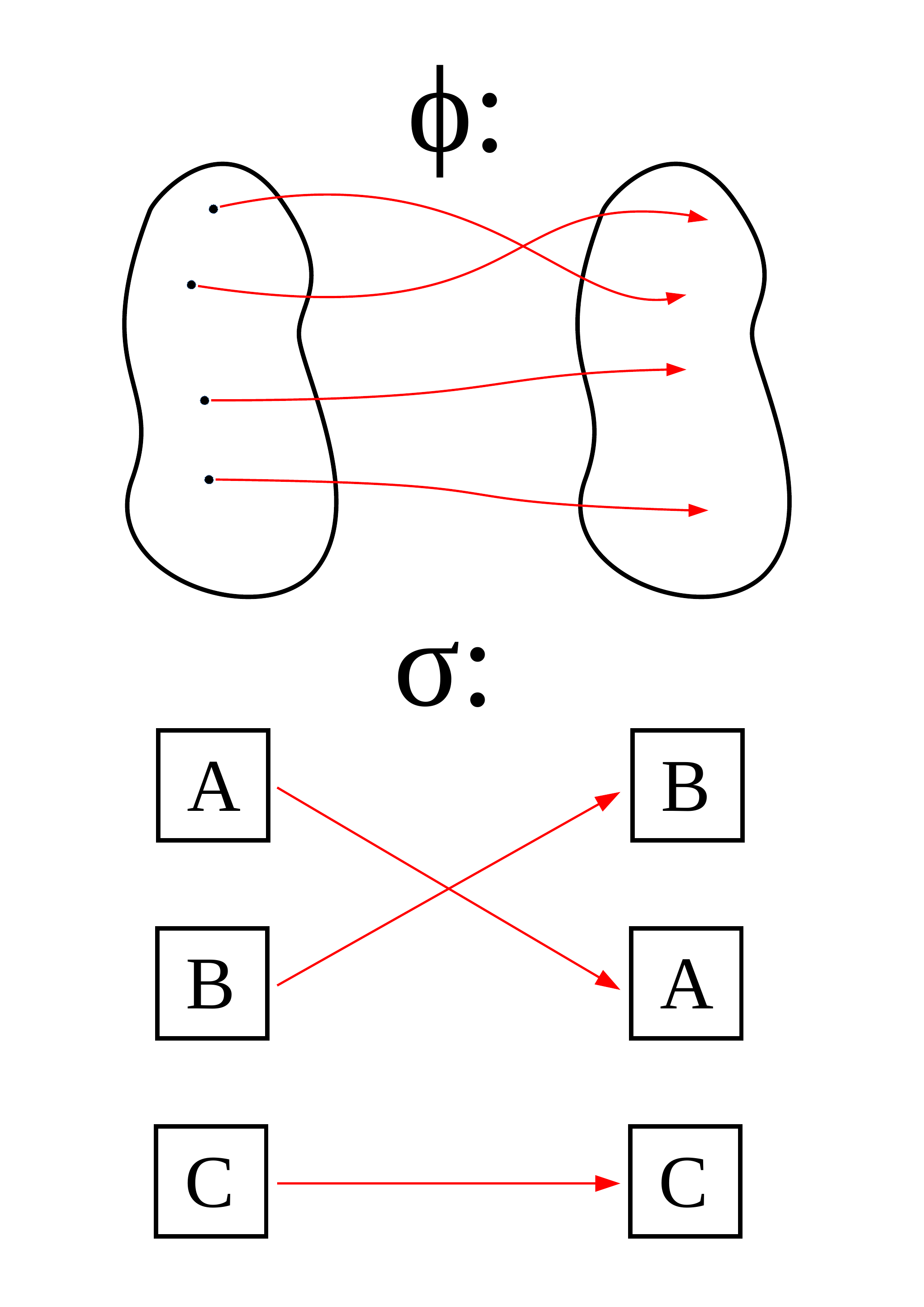}
\caption{A visual representation of a diffeomorphisms $\phi$ (smooth relabelling of points in a manifold) and a permutation $\sigma$ (relabelling of elements in a finite set). A permutation is the discrete analogue of a diffeomorphism.}
\end{figure}

Our model simply replaces the uncountable set $\mathfrak{M}$ with a finite set of spacetime points, which we label $x=1,\dots,n$. (In section \ref{pm-formalism} above, we used  $x=A,B,\dots$ Other labellings will be used below according to convenience). As before, physical quantities can be defined at each point. Importantly, pairs of points in spacetime can be timelike separated (or even have undefined causal relations), meaning that a measurement at one point can influence measurements at different points. An instrument is the most general representation of a measurement (or operation), including how it affects the measured system. An individual CP map $\mathcal{M}^1_{i_1}$ within an instrument represents the measurement of a physical quantity at point $1$, yielding outcome $i_1$, with a transformation of the system described by $\mathcal{M}^1_{i_1}$.
%A proper treatment of instruments in a smooth manifold with no prior causal structure can be cumbersome, which is one of the main reason to opt for a discrete model instead: 
The Born rule for processes---Eq.~\eqref{eq:born-rule} extended to $n$ points---governs the probability distribution $P(i_1,\dots,i_n)$ to observe outcomes $i_1,\dots, i_n$ at spacetime points $1,\dots,n$. Note that it is not necessary to imagine an observer acting at each point in spacetime: We can have points where no operation is performed, corresponding to an instrument with a single element---the identity map.

The process matrix contains all the information relevant for evaluating the probabilities for arbitrary operations at any spacetime point. In a traditional model, such information would be encoded in an initial state, in the dynamical laws describing the propagation of systems across spacetime, and in the metric encoding the causal connections between points. In a successful formulation of quantum gravity, the process matrix should unify
such information in a way that does not depend on a background causal structure.

\section{Why permutation invariance?}

By thinking of a process matrix as representing a particular configuration of a discrete spacetime, we can make an analogy between background independence in the process matrix formalism and background independence in general relativity.

In general relativity, background independence is a consequence of the fact that observable quantities must be invariant under any arbitrary coordinate transformation\footnote{Note that this is a much stronger requirement than simply stating that we can use arbitrary coordinate systems. Indeed, most theories include observables that are not invariant under coordinate transformations. For example, in relativistic quantum field theory, we can measure the value of a field at a point $x$ or at a different point $y$, and these are different observables that can yield different outcomes. Such observables are not allowed in general relativity, as coordinate transformations act both on the fields and on the background spacetime.}. Formally, these transformations are smooth, invertible mappings from a manifold to itself, and are called \textit{diffeomorphisms}. In a discrete spacetime model, the smoothness requirement does not apply. Hence, a discrete coordinate transformation is any invertible map from the set of spacetime points onto itself. As our spacetime toy model contains only a finite set of points, its coordinate relabellings are defined as arbitrary permutations. Therefore, our task is to characterise invariance of observable quantities under permutations in the process matrix formalism.

In the process matrix formalism, the statistical properties of observables are given by e.g.~Eq.~\eqref{eq:born-rule}, the Born-rule generalisation for processes. In general, Eq.~\eqref{eq:born-rule} generates a multipartite probability distribution $P(i_1,...,i_n)$ given a particular process $W$, where $P(i_1,...,i_n)$ denotes the probability that measurement at point $1$ obtains outcome $i_1$, measurement at point $2$ obtains outcome $i_2$, etc. Although $P$ does not assume any causal structure, it does in general assume that it is possible to distinguish between and label the different points. Operationally, this implies the existence of some background reference frame, which allows one to determine that outcome $i_1$  corresponds to party 1, outcome $i_2$ to party 2, and so on. In other words, the labels of the parties form a (discrete) system of coordinates. In a background-independent theory, no background reference frame is available, so it is not possible to identify a point in spacetime with a coordinate. Therefore, the theory should be invariant under relabelling.

As discussed above, in a discrete spacetime model relabellings are simply permutations. As a consequence, in order for observable quantities to be invariant under relabellings, the probability distribution $P$ must be invariant under permutations of the parties, i.e.~$P(i_1,i_2,...,i_n)=P(i_{g(1)},i_{g(2)},...,i_{g(n)}),$ for all permutations $g\in S_n$. Permutation invariance as a discrete analogue of diffeomorphism invariance is also discussed, for example, in Ref.~\cite{Arrighi2020}.

Invariance under permutations is a particular case of invariance under an arbitrary symmetry group. A general framework for dealing with this has been developed in Ref.~\cite{bartlett:2007}. Although this framework deals with Lie groups rather than finite groups (such as permutations), the main results, which we will use below, also hold for finite groups.

First, we must introduce a mathematical representation for permutations, which we will use throughout the paper. Just as the group of diffeomorphisms are represented by a (continuous) diffeomorphism group, the (finite) group of \textit{permutations} of a set of $n$ elements is known as the \textit{symmetric group} and is denoted $S_n$. 

We define the \textit{representation} of the symmetric group $S_n$ on the space of $n$-party process matrices as the map from elements $g\in S_n$ to operators $U_g$ such that $U_gWU^\dag_g$ performs a permutation on the laboratories. For example, the action of the `swap' permutation $U_{AB}$ on a bipartite process in the Hilbert-Schmidt basis is
\begin{multline}
	U_{AB}\left(\sum_{ijkl}\alpha_{ijkl}\sigma^{A_I}_i\sigma^{A_O}_j\sigma^{B_I}_k\sigma^{B_O}_l\right)U^\dag_{AB} \\
	=\sum_{ijkl}\alpha_{ijkl}\sigma^{A_I}_k\sigma^{A_O}_l\sigma^{B_I}_i\sigma^{B_O}_j.
	\label{eq:U-def}
\end{multline}
Note that input and output spaces are always swapped together. In order to make permutations well-defined, we assume that the input spaces of all laboratories have equal dimensions, and similarly all output spaces. It is not difficult to generalise, but we will not do so here \cite{thesis}.

We say that a linear operator {$A$} (which can be a process matrix or a measurement, or more generally even a quantum state or POVM element) is \textit{permutation invariant} if it is unchanged by the action of any permutation, i.e.~$U_gA{U_g}^\dag=A,\ \forall g\in S_n$. Equivalently, $A$ is permutation invariant if $\mathcal{G}[A]=A$, where $\mathcal{G}$ is the \textit{twirl operation},
\begin{equation}
	\mathcal{G}[W]:= \frac{1}{n!}\sum_{g\in S_n}U_gW{U_g}^\dag.
\end{equation}

\section{Process matrices without spacetime events.}

Now we can formalise the ideas we introduced in the previous section. Our overarching goal is to develop a framework for processes in which measurement statistics are permutation-invariant, so that the processes are background-independent. We find that there are different ways to achieve this.

One way to ensure permutation invariant statistics is to restrict measurements to be permutation invariant. As we will discuss below, invariant measurements are not necessarily of the product form. Therefore, we will denote a general measurement operator across all labs with a single symbol $M_{\bm{i}}$, where $M_{\bm{i}}\equiv M_{i_1}^AM_{i_2}^B...$ for the particular case of product measurements and  $\bm{i}\equiv (i_1, i_2,\dots)$ denotes the collection of outcomes (which are still associated with the distinct labs/spacetime points). If $M_{\bm{i}}$ is invariant, we have $P(i_{g(1)},i_{g(2)},...,i_{g(n)}) = \Tr[W U_g M_{\bm{i}} U_g^\dag] = \Tr[W M_{\bm{i}}] = P(i_{1},i_{2},...,i_{n})$, as required.  However, $\Tr[W U_g M_{\bm{i}} U_g^\dag]=\Tr[U_g^\dag W U_g M_{\bm{i}}]$, so $W=\mathcal{G}[W]$ also implies that measurement statistics are invariant, even if $M_{\bm{i}}\neq \mathcal{G}[M_{\bm{i}}]$. Finally, the measurement statistics will be permutation-invariant if both $W$ and $M_{\bm{i}}$ are permutation-invariant. Although they all represent the same physics, each choice is suggestive of a different way to describe permutation invariance.

\textit{If measurement operators are permutation-invariant}, but not the processes themselves, then we might think of processes as being described relative to some fixed background that we cannot access, so that we are restricted to using permutation-invariant measurements.

\textit{If process matrices are permutation-invariant}, but not the measurement operators, then we might instead say that we are making measurements relative to some background reference frame, but that what we observe is permutation-invariant---choosing a different reference frame will give us the same statistics.

\textit{If both process and measurements are permutation-invariant}, then we have totally abandoned the concept of a fixed, absolute reference frame. As in General Relativity, any reference frame must be defined using the available degrees of freedom of the system, the `rods and clocks' picture. The ordering with respect to which we describe our system is not a physically relevant aspect of the theory, just like the description of spacetime relative to a particular choice of coordinates in General Relativity. In other words, the original ordering, implied by the tensor product of local Hilbert spaces, merely represents a fictitious, rather than physical, reference frame, and permutations only represent transformations relative to such a fictitious reference frame.

The three pictures we just described are operationally equivalent, and it is sometimes helpful to work in one picture over another. For example, by requiring that process matrices $W$ are permutation-invariant, $W=\mathcal{G}[W]$, we can see that even with as strict a constraint as permutation-invariance one still obtains nontrivial behaviour of processes. Consider the process matrix
\begin{multline}
	W = \frac{1}{4}\big[\mathbbm{1}^{\otimes4} + a'_0\sigma_z \mathbbm{1}\sigma_z\mathbbm{1} - a'_1(\sigma_z\mathbbm{111} + \mathbbm{11}\sigma_z\mathbbm{1}) \\
	-a'_2(\sigma_z\mathbbm{11}\sigma_z+\mathbbm{1}\sigma_z\sigma_z\mathbbm{1}) +a'_3(\sigma_z\mathbbm{1}\sigma_z\sigma_z+\sigma_z\sigma_z\sigma_z\mathbbm{1}) \\
	+a'_4(\sigma_z\mathbbm{1}\sigma_x\sigma_x-\sigma_z\mathbbm{1}\sigma_y\sigma_y+\sigma_x\sigma_x\sigma_z\mathbbm{1}-\sigma_y\sigma_y\sigma_z\mathbbm{1})\big],
\label{eq:bipartite_causal_indefinite}
\end{multline}
which was presented in Ref.~\cite{branciard:2015}, with coefficients $\vb{a'} = (0.0390, 0.3355, 0.2451, 0.4291, 0.2097)$. In Eq.~\eqref{eq:bipartite_causal_indefinite} we have omitted labels, so that $ABCD=A^{A_I}B^{A_O}C^{B_I}D^{B_O}$. As shown in \cite{branciard:2015}, this process can violate a causal inequality---a device-independent test for indefinite causal order similar to a Bell inequality. Therefore, it represents a minimal example of permutation-invariant process with no definite causal order (in the sense that it involves only two parties, each with a single-qubit system).

Another important consequence that arises is that permutation-invariant process matrices cannot be causally ordered. We will demonstrate this with an example. Consider, for simplicity, the framework in which we impose permutation invariance on processes but not on measurements. Take a  process representing a state $\varrho$ prepared in laboratory $A$ and sent to laboratory $B$ through a channel $T$, $W=\varrho^{A_I}T^{A_OB_I}\mathbbm{1}^{B_O}$. This is not permutation-invariant: $A$ can signal to $B$, but $B$ {cannot signal} to $A$. We can make the process invariant by taking the mixture $W^{\mathrm{inv}}=(\varrho^{A_I}T^{A_OB_I}\mathbbm{1}^{B_O}+\varrho^{B_I}T^{B_OA_I}\mathbbm{1}^{A_O})/2$, noting the change in superscripts in the second term. $W^{\mathrm{inv}}$ may be permutation-invariant, but we have lost the ability to determine whether the state $\varrho$ was prepared in laboratory $A$ and then sent to laboratory $B$, or the reverse: we cannot perform a measurement that will tell us \textit{where} the state preparation occurred. We have lost a way to label laboratories or, equivalently, a definition of spacetime points---we no longer have a reference frame for spacetime. This appears to be a general feature of background-independence, as it is also found in general relativity.

A related phenomenon is that one cannot have an instrument where all operations are (a) products of local operations, and (b) permutation-invariant (aside from the degenerate case where each of the measurement operators $M_i=N_i N_i...N_i$ act identically on every laboratory). This arises because any permutation invariant product of local measurement operators $M_{i_1}M_{i_2}...M_{i_n}$ must satisfy $M=U_gMU_g^\dag$ for all $U_g$ and therefore $M_i=M_j$ for all $i, j$. This might appear alarming: one of the fundamental tenets of the process matrix formalism is that measurements can be performed locally. However, in the next section we will discuss how a definition of locality can be recovered.

\section{Invariance with a reference frame.}

Although in the previous section we found that permutation invariance removes the distinction between points in spacetime, it is possible to recover a definition of spacetime points using a \textit{material reference frame}---a `rods and clocks' reference frame constructed out of physical systems. This corresponds to techniques used in other frameworks for relational quantum theory, such as \cite{Page1983, Rovelli1996, Poulin2006, Giovannetti2015, Miyadera2016, Giacomini2019, Smith2019quantizingtime}. In this way, we can encode all non-invariant processes, such as those typically studied in the literature, into invariant ones.

The idea of a material reference frame is to take a non-invariant process matrix, which identifies the Hilbert spaces (e.g.~$A_IA_O$, $B_IB_O$ etc.) with local laboratories labelled by some external reference frame, and add to each laboratory a physical \textit{reference} system whose quantum state encodes a label uniquely specifying that laboratory. Then, a local observer can measure this reference system to obtain information about which point of spacetime they occupy. In this way, we have encoded the information from the old abstract reference frame into a physical, observable reference frame.

This being done, we remove the external reference by making the joint process invariant under permutations. The invariant process is simply the sum of all possible permutations acting on the extended process (consisting of the system and reference frame), which gives
\begin{equation}
    W^\text{inv} = \frac{1}{n!}\mathcal{R}(W),
\end{equation}
where
\begin{equation}
	\mathcal{R}(A) := \sum_{g\in S_n}U^{SR}_gA^S[01...(n-1)]^{R_I}\mathbbm{1}^{R_O}{U^{\dag}_g}^{SR}
	\label{eq:process-inv}
\end{equation}
is a superoperator that applies to arbitrary operators (not necessarily process matrices).
In Eq.~\eqref{eq:process-inv}, the superscript on $A^S$ denotes that it is a part of the \textit{system} space $S=S^1_IS^1_O...S^n_IS^n_O$, while the superscripts $R=R_IR_O=R^1_IR^1_O...R^n_IR^n_O$ denote the \textit{reference frame} space, which contains inputs and outputs. We have used the notation $[\psi]=\ketbra{\psi}{\psi}$, so that $[01...(n-1)]^{R_I}=\ketbra{0}{0}^{R^1_I}...\ketbra{n-1}{n-1}^{R^n_I}$. Finally, $U^{SR}_g=U^S_gU^R_g$, which acts separately on the system and reference frame spaces. $U_g$ is a representation of the permutation $g$. This means that a given permutation will act on the input and output spaces of both the system and the reference frame together, so that each reference system remains associated with its corresponding laboratory, and each output space remains associated with its corresponding input space.

Essentially, we have moved from a particular process $W$ to an \textit{equivalence class} of processes (the terms in Eq.~\eqref{eq:process-inv}, related by permutations) described by $\mathcal{R}(W)$, which we treat as the fundamentally meaningful physical object, just as we consider equivalence classes of diffeomorphism-invariant spacetimes as the meaningful physical system in relativity.

Using Eq.~\eqref{eq:process-inv}, we can construct permutation-invariant processes that reproduce the statistics of arbitrary, non-invariant processes. However, as a consequence of adding a locally observable reference system to each laboratory, instruments now need to be extended so that the probability of \textit{some} measurement occurring is one. This completion turns out to be somewhat arbitrary, suggesting that there exist physically distinct instruments that are \textit{indistinguishable} when using any reference frame. 

To obtain permutation-invariant processes and measurements, we use the following maps:
\begin{align}
W\to W^\text{inv}&\equiv \frac{1}{n!}\mathcal{R}(W),\label{eq:pm-inv} \\
M_{i_1...i_n}^S&\equiv M^{S^1}_{i_1}...M^{S^n}_{i_n} \nonumber\\
\to M^\text{inv}_{i_1...i_n}&\equiv\mathcal{R}(M^{S^1}_{i_1}...M^{S^n}_{i_n}) \nonumber\\
&+\frac{1}{Nd_{O}}\left(\mathbbm{1}^{SR}-\mathcal{R}\left(\mathbbm{1}^{S}\right)\right), \label{eq:ins-inv}
\end{align}
where the instrument is composed of $N$ measurements. The additional term included in the invariant measurement operators is an arbitrary completion required in order for the instrument to be CPTP. $W^\text{inv}$ and $M^\text{inv}_{i_1...i_n}$ are now invariant under the action of $S_n$.

In the appendix, we prove that $W^\text{inv}$ are valid processes, that the $M^{\mathrm{inv}}_{i_1,...,i_n}$ are each valid elements of instruments, and that $\sum_{\{i\}}M^\mathrm{inv}_{i_1,...,i_n}$ is a CPTP map. In addition, we can show that the Born rule is maintained by the permutation invariance. Using Lemma \ref{multiplication} from the appendix,
\begin{align}
\begin{split}
&\Tr[W^\text{inv}M_i^\text{inv}]\\
&=\Tr[\frac{1}{n!}\mathcal{R}(W)(\mathcal{R}(M_i)+\frac{1}{N}(\mathbbm{1}^{SR}-\mathcal{R}(\mathbbm{1}^S))] \\
&= \Tr[\frac{1}{n!}\mathcal{R}(WM_i)]+\Tr[\frac{1}{n!}(\mathcal{R}(W)-\mathcal{R}(W))] \\
&= \Tr[\frac{1}{n!}\mathcal{R}(WM_i)]=\Tr[WM_i].
\end{split}
\end{align}

In the previous section, we mentioned that it is impossible to have an instrument in which all elements are both permutation-invariant and decompose into a product of local measurements. Since one of the central ideas of the process matrix formalism is locality, this was surprising. Here, we see that, \textit{conditionally on measuring in a particular reference frame}, the elements of a permutation-invariant instrument once more decompose into local measurements. This is because conditioning on a reference frame configuration $[R_0R_1...R_N]$ is equivalent to projecting onto $\mathbbm{1}\otimes [R_0R_1...R_N]$, which removes all terms that have a reference frame state that is different to the one being projected onto. Thus, we recover our definition of locality and see that it is only meaningful relative to a physical reference frame.

\section{Symmetry and superselection.}

Usually, symmetry constraints in quantum mechanics give rise to superselection rules on allowed states. That is, states have some `definite property' and coherences between different `types' of that property cannot exist.

The archetypal example of a superselection rule is a $U(1)$ gauge symmetry. For example, electromagnetism obeys a global $U(1)$ symmetry. This symmetry is associated with a superselection rule for electric charge: states can have any integer value of charge, but they cannot be in a superposition of two different charge values. However, it is possible to have a classical statistical mixture of positive and negative charge, such as for example when there is some classical uncertainty as to the nature of the particle being prepared such as an electron or positron. 

The reason superselection rules arise can be seen by decomposing the Hilbert space of states in terms of copies of irreducible representations of the symmetry group (in our case the group of permutations $S_n$). 

A Hilbert space with a representation $U_g$ of a group $G$ can be decomposed into a direct sum of \textit{charge sectors} $\mathcal{H}_q$, each containing an inequivalent representation of $G$. Each charge sector can in turn be decomposed into a tensor product of a \textit{gauge space} $\mathcal{M}_q$, carrying an irreducible representation of $G$, and a \textit{multiplicity space} $\mathcal{N}_q$, carrying an identity representation of $G$. The entire space decomposes as
\begin{equation}
\mathcal{H}=\bigoplus_q \mathcal{M}_q\otimes \mathcal{N}_q
,\end{equation}
so that each charge sector contains a number of copies of a particular irreducible representation. Each inequivalent representation corresponds to a different `type' of charge (in the $U(1)$ example, {the number of elementary electric charges}). In this decomposition, the \textit{twirl} - which, we recall, is the `average over all transformations' superoperator - can be expressed as \cite{bartlett:2007}
\begin{equation}
\mathcal{G}=\sum_q(\mathcal{D}_{\mathcal{M}_q}\otimes \mathcal{I}_{\mathcal{N}_q})\circ \mathcal{P}_q
\label{eq:rep-twirl}
,\end{equation}
where $\mathcal{D}$ is the completely depolarising map that sends each state to the maximally mixed state, $\mathcal{I}$ is the identity map, and $\mathcal{P}_q$ is the projector onto $\mathcal{H}_q$. Eq.~\eqref{eq:rep-twirl} tells us that linear operators that are $G$-invariant (and therefore twirl-invariant) must decompose as
\begin{equation}
A=\sum_q \frac{1}{d_{\mathcal{M}_q}}\mathbbm{1}_{\mathcal{M}_q} \otimes A_{\mathcal{N}_q}
\label{eq:twirl-inv-process}
,\end{equation}
where $d_{\mathcal{M}_q}=\text{dim}(\mathcal{M}_q)$. This restriction is a \textit{superselection rule}: requiring that allowed operators are block-diagonal in the different inequivalent representations is the same as saying that there can be no coherences between different charges. Additionally, Eq.~\eqref{eq:twirl-inv-process} gives us information about the physical degrees of freedom associated with each `type' of charge: for a charge $q$, the physical state space reduces to the invariant subspace $\mathcal{N}_q$.

So goes the typical interpretation of a superselection rule: physical objects have some well-defined charge that can be subject to classical uncertainty but not quantum indeterminacy. It turns out that, for processes, this standard interpretation fails\footnote{Strictly speaking, the concept of superselection rule does not apply directly to process matrices, because the normalisation constraint, Eq.~\eqref{eq:pm-normalised}, implies that not all vectors in the underlying Hilbert space represent valid processes. However, we retain a decomposition of the space of processes into charge sectors, with coherence possible between sectors. Symmetries should still imply loss of coherence; one could expect that this corresponds to a superselection rule.}. The reason it fails is because some inequivalent representations do not contain \textit{any} valid processes (whether there are \textit{no} representations that contain valid processes is an open question). Here, we will show that for any $n$-partite process with two-dimensional (qubit) local Hilbert spaces, the symmetric and antisymmetric representations never contain valid processes. First, we will consider the base case of a bipartite process, and then prove by induction that this will hold for a process with any number of laboratories, as long as each laboratory carries a single qubit.

A bipartite process with two-dimensional input and output spaces gives rise to a 16-dimensional Hilbert space spanned by $\ket{i}^{A_I}\ket{j}^{A_O}\ket{k}^{B_I}\ket{l}^{B_O}\equiv \ket{ijkl},\ i,j,k,l\in \{0,1\}$. Permutations of the two laboratories are obtained by the action of $S_2$, which has two elements: the identity element and the swap element $U_{AB}\ket{ijkl}=\ket{klij}$. There are two ineqivalent representations, the symmetric and antisymmetric representations (denoted $\mathcal{W}^+$ and $\mathcal{W}^-$), which are respectively spanned by
\begin{align}
	\ket{\psi_S^{(1)}}&=\ket{ijij},\ i,j\in\{0,1\}, \label{eq:sym-basis-1}\\
	\ket{\psi_S^{(2)}}&=\frac{1}{\sqrt{2}}(\ket{ijkl}+\ket{klij}),\ i,j\neq k,l \label{eq:sym-basis-2}
,\end{align}
for the `symmetric representation', and
\begin{align}
	\ket{\psi_A}=\frac{1}{\sqrt{2}}(\ket{ijkl}-\ket{klij}),\ i,j\neq k,l  \label{eq:antisym-basis}
,\end{align}
for the `anti-symmetric'. In all, the symmetric representation is 10-dimensional, and the antisymmetric is 6-dimensional. The superselection rule tells us that any physically realisable (permutation invariant) process must have the form
\begin{equation}
	W^{A_IA_OB_IB_O}=W^++W^-
	\label{eq:sym-process}
,\end{equation}
where $W^+=\sum_{i,j}\alpha_{ij}\ketbra{w^{+}_i}{w^{+}_j}$ and $W^-=\sum_{ij}\beta_{ij}\ketbra{w^{-}_i}{w^{-}_j}$, where $w^{\pm}_i$ are respectively basis elements of the symmetric $(+)$ or antisymmetric $(-)$ representations given in eqs.~\eqref{eq:sym-basis-1}-\eqref{eq:antisym-basis}.

Matrices of the form of Eq.~\eqref{eq:sym-process} will not in general be valid processes. Process matrices must also satisfy Eq. \eqref{eq:pm-normalised}. Solving algebraically for a closed-form constraint on the diagonals (which can be done with a computer algebra program such as Mathematica, or by hand) reveals that the trace of any $W^+$ or $W^-$ containing no forbidden causal terms and belonging solely to one of the two inequivalent representations must be zero. This violates the normalisation constraint of Eq.~\eqref{eq:pm-normalised}. However, there are matrices $W=W^++W^-$ where the $W^\pm$ individually contain some causally forbidden terms, but $W^++W^-$ does not, allowing $W$ to be a valid process satisfying the normalisation constraint.

We can use the result for bipartite qubit processes as the base case to show that for any number of qubit laboratories, there will be no valid processes living in the symmetric or antisymmetric representations. There are two essential parts to this argument. The first is that, given a process matrix $W$, tracing out any number of laboratories must result in a valid process. In particular, for a process $W^{S^1...S^n}$ with $n$ laboratories, $\frac{1}{d_{S^1...S^n\setminus S^iS^j}}\Tr_{S^1...S^n\setminus S^iS^j}[W^{S^1...S^n}]=\overset{\sim}{W}{}^{S^iS^j}$, where $\Tr_{S^1...S^n\setminus S^iS^j}$ denotes the trace over all laboratories except $S^i$ and $S^j$ and $d_{S^1...S^n\setminus S^iS^j}$ is the dimension of all the spaces except $S^i, S^j$ , must be a valid process. $\overset{\sim}{W}{}^{S^iS^j}$ is known as a reduced process.

The second part of the argument is that for any state living in the symmetric (antisymmetric) representation of $S_n$, any $(n-1)$-dimensional subsystem of that state will be in the symmetric (antisymmetric) representation of $S_{n-1}$. To see this, observe that we can write any $n$-dimensional state $\ket{\psi}$ as 
\begin{equation}
\ket{\psi}=\sum_j c_j \ket{\psi_j}^{S^1...S^{n-1}}\ket{j}^{S^n}
,\end{equation}
where $\ket{\psi_j}$ is an $(n-1)$-dimensional state, $c_j$ are some coefficients, and $\ket{j}$, $j=1,...,n$ is a basis state of $S^n$. Then, we have 
\begin{align}
\bra{k}^{S^n}\ket{\psi} &= \bra{k}^{S^n}\sum_j c_j\ket{\psi_j}^{S^1...S^{n-1}}\ket{j}^{S^n} \nonumber\\ &= c_k\ket{\psi_k}^{S^1...S^{n-1}}
.\end{align} 
If $\ket{\psi}$ lives in the symmetric representation, then $U_g\ket{\psi}=\ket{\psi}\forall g\in G$. In particular, this holds for all $g$ that leave the state $\ket{j}^{S^n}$ in system $S^n$ unchanged. From this, we can see that, for $g\in S_{n-1}$ and $U_g$ acting on the first $n-1$ subsystems,
\begin{align}
\begin{split}
&\qquad U_g\ket{\psi} = \ket{\psi} \\
&\Rightarrow \sum_j c_j U_g\ket{\psi_j}^{S^1...S^{n-1}}\ket{j}^{S^n} \\
&\qquad= \sum_j c_j\ket{\psi_j}^{S^1...S^{n-1}}\ket{j}^{S^n}\\
&\Rightarrow \bra{k}^{S^n}\sum_j c_j U_g\ket{\psi_j}^{S^1...S^{n-1}}\ket{j}^{S^n} \\
&\qquad= \bra{k}\sum_j c_j\ket{\psi_j}^{S^1...S^{n-1}}\ket{j}^{S^n} \\
&\Rightarrow c_k U_g\ket{\psi_k}^{S^1...S^{n-1}} = c_k\ket{\psi_k}^{S^1...S^{n-1}} \\
&\Rightarrow U_g\ket{\psi_k}^{S^1...S^{n-1}} = \ket{\psi_k}^{S^1...S^{n-1}}.
\end{split}
\end{align}
Therefore, the $\ket{\psi_j}^{S^1...S^{n-1}}$ will all be in the symmetric subspace, and $\Tr_{S^{n}}[\ketbra{\psi}{\psi}]/d_{S^{n}}$ will be a linear combination of operators on the symmetric subspace. This holds analogously for the antisymmetric subspace, where $U_g\ket{\psi}=\text{sgn}(g)\ket{\psi}$. The same result holds if we `project out' any number of subspaces. Taking the partial trace of a matrix in the (anti)symmetric subspace will therefore result in a matrix that is still in the (anti)symmetric subspace, where we define the (anti)symmetric subspace for matrices as the space of matrices that act on the (anti)symmetric subspace for states. We will equivalently say that these matrices belong to the (anti)symmetric representation.

Combining these two arguments, we see that for an $n$-partite process $W$ living in the symmetric (antisymmetric) representation of $S_n$, $\overset{\sim}{W}{}^{S^iS^j}$ must be a valid bipartite process and live in the symmetric (antisymmetric) representation for all $i,j=1,..,n, i\neq j$. But, we saw that there are no valid symmetric or antisymmetric bipartite processes, so this is a contradiction. This tells us that there are no symmetric or antisymmetric $n$-partite qubit processes. This proof generalises to any local Hilbert space dimension once one has proved the base case.

\section{Conclusion}

In this paper, we have used the process matrix formalism to show that it is possible to describe quantum causal order with background independence built in, under the assumption of a discretised spacetime. We have also seen that some properties of background independent processes have counterparts in general-relativistic background independence, e.g.~the `washing out' of spacetime and the need to construct a material reference frame to recover a definition of spacetime points.

Our results show that background independence is consistent with the principles of the process matrix formalism, including, with some reinterpretation, locality---which must be defined \textit{relative} to a reference frame. This follows from our discussion on local vs.~background independent measurements. 

We also investigated the general symmetry constraints imposed on processes by permutation invariance, and discovered that the constraint is different to the typical superselection rule: the standard interpretation is simply that physical systems must have a well-defined `charge', but for permutation-invariance not all charges correspond to physically realisable processes.  Instead, valid processes can be block-diagonal combinations of matrices that are not themselves valid processes. This implies that background independence in quantum mechanics cannot be interpreted analogously to other known symmetries of nature, and that a new interpretation may be necessary. 
Whether or not this `charge' can be taken seriously as a physical quantity is, for the moment, an open question.

Finally, our attention has focused on permutations---namely relabellings of laboratories. These can be understood as `classical' coordinate transformations, which do not change, for example, whether a particle is localised at a point or in a superposition. It has been proposed that combining quantum mechanics and general relativity requires considering more general, `quantum' coordinate transformations \cite{Hardy2016, zych:2020}. It is an interesting open question whether it is possible to extend our treatment to include such `quantum relabellings'.

%\acknowledgments
%\section{acknowledgements}
\begin{acknowledgments}
This work was partially supported through an Australian Research Council (ARC) Discovery Early Career Researcher Award (DE170100712) and by the ARC Centre of Excellence for Engineered Quantum Systems (CE17010000). The University of Queensland (UQ) acknowledges the Traditional Owners and their custodianship of the lands on which UQ operates.
\end{acknowledgments}

% Comment this out to remove titles from bibliography
%\nocite{apsrev41Control} 

\bibliographystyle{quantum}
\RaggedRight 
\bibliography{thesis}

\begin{widetext}
\section{Appendix}
Here, we prove some results stated in the main text.
\subsection{Some facts about the $\mathcal{R}$ map.}
It's useful to first prove two properties of the $\mathcal{R}$ map, namely that $\mathcal{R}(A)+\mathcal{R}(B)=\mathcal{R}(A+B)$ and $\mathcal{R}(A)\mathcal{R}(B)=\mathcal{R}(AB)$. For the first, we see that
\begin{lemma}
\label{addition}
$\mathcal{R}(A)+\mathcal{R}(B)=\mathcal{R}(A+B)$, for $A$ and $B$ linear operators.
\end{lemma}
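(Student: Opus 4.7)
The plan is to prove this by direct computation, exploiting the fact that $\mathcal{R}$ is manifestly a linear map: everything appearing in its definition other than the argument $A^S$ (the fixed reference state $[01\dots(n-1)]^{R_I}$, the identity $\mathbbm{1}^{R_O}$, and the conjugation by $U^{SR}_g$) does not depend on the input, and both the tensor product and conjugation by a fixed unitary are linear in the operator they are applied to.

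More concretely, I would start from the definition and write
\begin{equation}
\mathcal{R}(A+B)=\sum_{g\in S_n}U^{SR}_g\bigl[(A+B)^S\otimes[01\dots(n-1)]^{R_I}\otimes\mathbbm{1}^{R_O}\bigr]{U^{\dag}_g}^{SR}.
\end{equation}
Then I would distribute the tensor product over the sum $A+B$ to split the bracketed operator into two pieces, use the fact that conjugation by $U_g^{SR}$ distributes across this sum (i.e.\ $U(X+Y)U^\dag=UXU^\dag+UYU^\dag$), and finally exchange the order of the (finite) sum over $g\in S_n$ with the resulting sum of two terms. Recognising the two resulting summands as $\mathcal{R}(A)$ and $\mathcal{R}(B)$ by the definition completes the argument.

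There is essentially no obstacle here: the only thing to be mildly careful about is bookkeeping on which spaces each factor acts, ensuring that the splitting $(A+B)^S=A^S+B^S$ makes sense (which it does provided $A$ and $B$ are operators on the same system space $S$, as is implicit in the statement), and noting that $U_g^{SR}=U_g^S U_g^R$ acts on system and reference factors independently so linearity on $S$ is unaffected by the reference factors. The whole proof is a one or two line calculation and can be presented inline.
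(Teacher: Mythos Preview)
Your proposal is correct and follows essentially the same approach as the paper: both proofs are a direct computation using distributivity of the tensor product over sums and linearity of conjugation by $U_g^{SR}$, with the only cosmetic difference that the paper starts from $\mathcal{R}(A)+\mathcal{R}(B)$ and combines while you start from $\mathcal{R}(A+B)$ and split.
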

\begin{proof}
\begin{align}
\begin{split}
\mathcal{R}(A)+\mathcal{R}(B)&=\sum_{g\in S_n}U^{SR}_gA^S [01...(n-1)]^{R}{U^\dag_g}^{SR} + \sum_{g\in S_n}U^{SR}_g B^S [01...(n-1)]^{R}{U^\dag_g}^{SR}\\
&=\sum_{g\in S_n}U^{SR}_g\left(A^S [01...(n-1)]^{R} + B^S [01...(n-1)]^{R}\right){U^\dag_g}^{SR} \\
&=\sum_{g\in S_n}U^{SR}_g\left(A^S+B^S\right)[01...(n-1)]^{R_i}{U^\dag_g}^{SR}=\mathcal{R}(A+B).
\end{split}
\end{align}
\end{proof}
The second takes slightly more work, but we obtain
\begin{lemma}
\label{multiplication}
$\mathcal{R}(A)\mathcal{R}(B)=\mathcal{R}(AB)$, for $A$ and $B$ linear operators.
\end{lemma}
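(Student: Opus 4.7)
My strategy is to expand $\mathcal{R}(A)\mathcal{R}(B)$ as a double sum over $S_n\times S_n$, reindex by the relative permutation $k=g^{-1}h$, and then use the fact that the reference-frame projector $[01\ldots(n-1)]^{R_I}$ projects onto a basis state with $n$ pairwise distinct labels to kill every term with $k\neq e$. This collapses the double sum back to the single sum defining $\mathcal{R}(AB)$.

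The first step is to write out
\begin{align*}
\mathcal{R}(A)\mathcal{R}(B)=\sum_{g,h\in S_n}U_g^{SR}\bigl(A^S[01\ldots(n-1)]^{R_I}\mathbbm{1}^{R_O}\bigr)U_g^{\dagger SR}U_h^{SR}\bigl(B^S[01\ldots(n-1)]^{R_I}\mathbbm{1}^{R_O}\bigr)U_h^{\dagger SR},
\end{align*}
change variables to $(g,k)$ via $h=gk$, and invoke the group property $U_{gk}^{SR}=U_g^{SR}U_k^{SR}$ so that the middle product $U_g^{\dagger SR}U_h^{SR}$ collapses to $U_k^{SR}$ while the outer conjugation by $U_g^{SR}$ is preserved. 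Using the factorisation $U_k^{SR}=U_k^S\otimes U_k^{R_I}\otimes U_k^{R_O}$, I would then distribute across the three tensor slots so that the central bracket splits into a product of three sandwiches: the $S$ slot gives $AU_k^SBU_k^{S\dagger}$, the $R_O$ slot telescopes to $\mathbbm{1}^{R_O}$, and the $R_I$ slot is
\begin{align*}
[01\ldots(n-1)]^{R_I}U_k^{R_I}[01\ldots(n-1)]^{R_I}U_k^{R_I\dagger}.
\end{align*}

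The crux is evaluating this $R_I$ sandwich. Setting $\ket{\psi}=\ket{0,1,\ldots,n-1}$, it equals $\langle\psi|U_k^{R_I}|\psi\rangle\cdot\ketbra{\psi}{\psi}U_k^{R_I\dagger}$, and because $U_k^{R_I}$ permutes the $n$ pairwise distinct basis labels $0,1,\ldots,n-1$, any nontrivial permutation sends $\ket{\psi}$ to an orthogonal basis vector. Hence $\langle\psi|U_k^{R_I}|\psi\rangle=\delta_{k,e}$. Only the term $k=e$ survives the sum over $k$; for that term the central bracket reduces to $AB\otimes[01\ldots(n-1)]^{R_I}\otimes\mathbbm{1}^{R_O}$, and the remaining outer sum over $g$ reproduces $\mathcal{R}(AB)$ verbatim.

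The main obstacle is purely notational bookkeeping: verifying that the representation really does factor cleanly as $U_k^S\otimes U_k^{R_I}\otimes U_k^{R_O}$ (true because $U_k^{SR}$ is constructed to act on each of the system, input-reference, and output-reference spaces by the same permutation $k$, so that whole laboratories move together), and that $h\mapsto gk$ is a bijection of $S_n$ for each fixed $g$ (true because $S_n$ is a group). Once these are in place, the distinct-label orthogonality on the $R_I$ factor does all the real work in essentially one line.
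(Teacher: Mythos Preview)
Your proposal is correct and follows essentially the same route as the paper: expand the double sum over $S_n\times S_n$, factor $U^{SR}$ into system and reference pieces, and use that the reference projector $[01\ldots(n-1)]^{R_I}$ is built from $n$ pairwise distinct labels so that distinct permutations send it to mutually orthogonal projectors, collapsing the double sum to a single one. The only cosmetic difference is that you reindex to $k=g^{-1}h$ first and then obtain $\delta_{k,e}$, whereas the paper keeps $g,h$ and writes the same orthogonality as $U_g[01\ldots(n-1)]U_g^{\dagger}\,U_h[01\ldots(n-1)]U_h^{\dagger}=\delta(g^{-1}h)\,U_g[01\ldots(n-1)]U_g^{\dagger}$.
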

\begin{proof}
\begin{align}
\begin{split}
	\mathcal{R}(A)\mathcal{R}(B)&=\sum_{g\in S_n}\sum_{h\in S_n}(U^{SR}_g A^S[01..(n-1)]^R{U^\dag_g}^{SR})(U^{SR}_hB^S[01...(n-1)]^R{U^\dag_h}^{SR}) \\
	&=\sum_{g\in S_n}\sum_{h\in S_n}(U_g^S A^S {U^\dag_g}^S U_h^S B^S {U^\dag_h}^S)(U_g^R[01..(n-1)]^R{U^\dag_g}^R U_h^R[01...(n-1)]^R{U^\dag_h}^R) \\
	&=\sum_{g\in S_n}\sum_{h\in S_n}(U^S_g A {U^\dag_g}^S U^S_h B {U^\dag_h}^{S})(\delta(g^{-1}h)U^R_g[01..(n-1)]{U^\dag_g}^R) \\
	&=\sum_{g\in S_n}U_g^{SR} (AB)^S[01...(n-1)]^R{U_g^\dag}^{SR} = \mathcal{R}(AB),
\end{split}
\end{align}
using $U_g[01...(n-1)]U^\dag_gU_h[01...(n-1)]U^\dag_h=[g(0)g(1)...g(n-1)][h(0)h(1)...h(n-1)]=\delta_{g(0)h(0)}...\delta_{g(n-1)h(n-1)}[g(0)...g(n-1)]=\delta(g^{-1}h)U_g[01...(n-1)]U^\dag_g$, with $\delta_{ij}$ being the kronecker delta and $\delta(g^{-1}h)$ being $1$ if $g^{-1}h$ is identity and $0$ otherwise (the delta function on $S_n$).
\end{proof}

\subsection{Permutation-invariant processes.}
For $W^\text{inv}$ to be a valid process, it must be a positive semi-definite matrix that has trace 1 when multiplied with any tensor product of CPTP maps. The positive semidefinite requirement is satisfied by Eq.~\eqref{eq:pm-inv} because both the tensor product and the sum of positive semidefinite matrices is positive semidefinite. We can also demonstrate that $W^\text{inv}$ gives normalised probabilities.

First, note that a permutation of a valid process is again a valid process. This is because $\Tr[UWU^{\dagger} M] = \Tr[W U^{\dagger}M U ]$ (where $M$ denotes the tensor product of local CP maps). Therefore, evaluating the Born rule for a permuted process is the same as evaluating it for the original process and permuted maps. If $\Tr[W  M ] = 1$ for arbitrary CPTP maps, then the same is true for permuted CPTP maps, meaning that the normalisation condition is preserved under permutations. Furthermore, convex combinations preserve process normalisation, so all we need is to prove that adding the reference frame systems to a valid process we get a valid process.

Let us then consider the operator $W^S[01...(n-1)]^{R_I}\mathbbm{1}^{R_O}$, where $W^S$ is a valid process. For arbitrary CPTP maps acting on local systems and reference frames, $M^{S^1R^1}_1 \cdots M^{S^nR^n}_n$, we have
  \begin{equation} \label{extendednorm}
  \Tr \left[W^S[01...(n-1)]^{R_I}\mathbbm{1}^{R_O}\cdot \left(M^{S^1R^1}_1 \cdots M^{S^nR^n}_n \right)\right]  = \Tr_S \left[W^S \cdot \left(\bar{M}_1^{S^1}\cdots\bar{M}_n^{S^n}\right)\right],
\end{equation}
where
\begin{equation}
   \bar{M}^{S^j}_j := \Tr_{R^j}\left[\left( [j-1]^{R^j_I}\mathbbm{1}^{R^j_O}\right)\cdot M^{S^jR^j}_j\right],\qquad j=1,\dots n.
\end{equation}
We see that each $\bar{M}^{S^j}_j$ is CPTP, which, in the Choi representation, means that $\Tr_{S^j_O}\bar{M}^{S^j}_j=\mathbbm{1}^{S_I^j}$. Indeed,
\begin{equation}
    \Tr_{S^j_O}\bar{M}^{S^j}_j = \Tr_{R^j_I}\left[ [j-1]^{R^j_I}\cdot\left( \Tr_{S^j_O R^j_O}M^{S^jR^j}_j\right)\right] = \Tr_{R^j_I}\left[ [j-1]^{R^j_I}\mathbbm{1}^{S^j_I}\right] = \mathbbm{1}^{S^j_I},
\end{equation}
where we have used the CPTP property of the extended maps, $\Tr_{S^j_O R^j_O}M^{S^jR^j}_j = \mathbbm{1}^{S^j_IR^j_I}$, and the normalisation of the reference frame state, $\Tr \, [j-1] = 1$.
As the $\bar{M}^{S^j}_j$ are CPTP, and $W^S$ is a valid process, the rhs of Eq.~\eqref{extendednorm} is equal to 1, which means the lhs is 1 too. This proves that $W^S[01...(n-1)]^{R_I}\mathbbm{1}^{R_O}$ satisfies the process normalisation condition.

\subsection{Permutation-invariant instruments.}
Here we show that the $M^\text{inv}_{i_1...i_n}$, as defined in Eq.~\eqref{eq:ins-inv}, is a valid instrument. To do this, we show that each element is a CP map, and that the trace over the combined system-reference frame output is $\Tr_{S_OR_O}[M^\text{inv}]=\mathbbm{1}^{S_IR_I}$, so that the maps sum to a CPTP map:
\begin{align}
\begin{split}
\Tr_{S_OR_O}[\sum_{i_1...i_n}M^\text{inv}_{i_1...i_n}]&=\Tr_{S_OR_O}[\sum_{i_1...i_n} \mathcal{R}(M_{i_1...i_n}^S)+\frac{1}{Nd_O}\left(\mathbbm{1}^{SR}-\mathcal{R}(\mathbbm{1}^S\right)] \\
&=\Tr_{S_OR_O}[\mathcal{R}(\sum_{i_1...i_n}M_{i_1...i_n}^S)] + \frac{1}{d_O}\Tr_{S_OR_O}[\mathbbm{1}^{SR}]-\frac{1}{d_O}\Tr_{S_OR_O}[\mathcal{R}(\mathbbm{1}^{S})] \\ 
&= \Tr_{S_OR_O}[\mathcal{R}(M^S)]+\mathbbm{1}^{S_IR_I}-\Tr_{S_OR_O}[\sum_g U_g^{SR}\mathbbm{1}^{S}[01...(n-1)]^{R_I}\mathbbm{1}^{R_O}{U_g^\dag}^{SR}] \\
&= \Tr_{S_OR_O}[\sum_g U^{SR}_g M^S[01...(n-1)]^{R_I}\mathbbm{1}^{R_O}{U^\dag_g}^{SR}]+\mathbbm{1}^{S_IR_I} \\
&\qquad\qquad-\Tr_{S_OR_O}[\sum_g U^{SR}_g \mathbbm{1}^{S}[01...(n-1)]^{R_I}\mathbbm{1}^{R_O}{U^\dag_g}^{SR}] \\
&= \Tr_{S_OR_O}[\sum_g{U^\dag_g}^{SR} U^{SR}_g M^S\otimes[01...(n-1)]^{R_I}\mathbbm{1}^{R_O}]+\mathbbm{1}^{S_IR_I} \\
& \qquad \qquad - \Tr_{S_OR_O}[\sum_g {U^\dag_g}^{SR} U^{SR}_g \mathbbm{1}^{S}\otimes[01...(n-1)]^{R_I}\mathbbm{1}^{R_O}] \\
&= \sum_{g}\Tr_{S_O}[M^S]\Tr_{R_O}[[01...(n-1)]^{R_I}\mathbbm{1}^{R_O}]+\mathbbm{1}^{S_IR_I} \\
&\qquad \qquad -\sum_g\Tr_{S_O}[\mathbbm{1}^{S}]\Tr_{R_O}[[01...(n-1)]^{R_I}\mathbbm{1}^{R_O}] \\
&= (n! d_{S_O}\mathbbm{1}^{S_I}\Tr_{R_O}[[01...(n-1)]^{R_I}\mathbbm{1}^{R_O}]-n!d_{S_O}\mathbbm{1}^{S_I}\Tr_{R_O}[[R]^{R_I}\mathbbm{1}^{R_O}])+\mathbbm{1}^{S_IR_I}\\
&=\mathbbm{1}^{S_IR_I}.
\end{split}
\end{align}
We now show that the $M_{i_1...i_n}^\mathrm{inv}$ are all positive semidefinite. Observe that
\begin{equation}
    M_{i_1...i_n}^\mathrm{inv}=\mathcal{R}(M^{S^1}_{i_1}...M^{S^n}_{i_n})+\frac{1}{Nd_{O}}\left(\mathbbm{1}^{SR}-\mathcal{R}\left(\mathbbm{1}^{S}\right)\right)
\end{equation}
is a sum of two terms. It suffices to show that both of the two terms are positive semidefinite. For the first, we see that
\begin{align}
    \mathcal{R}(M^{S^1}_{i_1}...M^{S^n}_{i_n})=\sum U_g(M^{S^1}_{i_1}...M^{S^n}_{i_n})U_g^\dag [R],
\end{align}
which, as $U_g(M^{S^1}_{i_1}...M^{S^n}_{i_n})U_g^\dag$ and $[R]$ are both positive semidefinite, is just a sum of positive semidefinite operators. Therefore, it is positive semidefinite. For the second term, note that $\mathbbm{1}^{SR}$ is simply the sum of all projectors, while $\mathcal{R}(\mathbbm{1}^S)=\sum \mathbbm{1}^{R_I}[R]^{R_O}$ is a sum containing only projectors, so that
\begin{align}
    \frac{1}{Nd_{O}}\left(\mathbbm{1}^{SR}-\mathcal{R}\left(\mathbbm{1}^{S}\right)\right)\geq0
\end{align}
because $\mathbbm{1}^{SR}$ contains all terms that appear in $\mathcal{R}(\mathbbm{1}^S)$, but the reverse is not true. Since both terms are diagonal, we can see that this results in all eigenvalues being positive or zero.
\newpage
\end{widetext}
\end{document}